\newtheorem{thm}{Theorem}[section]
\newtheorem{lem}[thm]{Lemma}
\def\BibTeX{{\rm B\kern-.05em{\sc i\kern-.025em b}\kern-.08em
    T\kern-.1667em\lower.7ex\hbox{E}\kern-.125emX}}
\def\v{\mathbf{v}}
\def\E{\mathbb{E}}
\def\P{\mathbb{P}}
\begin{document}

\title{The Effect of Introducing Redundancy in a Probabilistic Forwarding Protocol}

\author{\IEEEauthorblockN{Vinay Kumar B.R.$^\dag$} \and \IEEEauthorblockN{Roshan Antony$^\ddag$} \and \IEEEauthorblockN{Navin Kashyap$^\dag$}}

\maketitle

\renewcommand{\thefootnote}{}
\footnotetext{
\noindent $^\dag$Vinay Kumar B.R.\  and N.\ Kashyap are with the Department of Electrical Communication Engineering, Indian Institute of Science, Bangalore. Email: \{vinaykb, nkashyap\}@iisc.ac.in

 $^\ddag$Roshan Antony is currently with Qualcomm India, Bangalore. Email: roshanantony@outlook.com}

\begin{abstract}
This paper is concerned with the problem of broadcasting information from a source node to every node in an ad-hoc network. Flooding, as a broadcast mechanism, involves each node forwarding any packet it receives to all its neighbours. This results in excessive transmissions and thus a high energy expenditure overall. Probabilistic forwarding or gossiping involves each node forwarding a received packet to all its neighbours only with a certain probability $p$. In this paper, we study the effect of introducing redundancy, in the form of coded packets, into a probabilistic forwarding protocol. Specifically, we assume that the source node has $k$ data packets to broadcast, which are encoded into $n \ge k$ coded packets, such that any $k$ of these coded packets are sufficient to recover the original $k$ data packets. Our interest is in determining the minimum forwarding probability $p$ for a ``successful broadcast'', which we take to be the event that the expected fraction of network nodes that receive at least $k$ of the $n$ coded packets is close to 1. We examine, via simulations and analysis of a number of different network topologies (e.g., trees, grids, random geometric graphs), how this minimum forwarding probability, and correspondingly, the expected total number of packet transmissions varies with the amount of redundancy added. Our simulation results indicate that over network topologies that are highly connected, the introduction of redundancy into the probabilistic forwarding protocol is useful, as it can significantly reduce the expected total number of transmissions needed for a successful broadcast. On the other hand, for trees, our analysis shows that the expected total number of transmissions needed increases with redundancy.
\end{abstract}

\begin{IEEEkeywords}
ad-hoc networks, broadcast, gossip, probabilistic forwarding, grid, tree, random geometric graph \end{IEEEkeywords}

\renewcommand{\thefootnote}{\arabic{footnote}}

\section{Introduction}
An ad-hoc network is a network of nodes which communicate with each other without relying on any centralized infrastructure. They are an integral part of defence operations and rescue missions. Instances include reconnaissance by soldiers, rescue during earthquakes, surveillance using drones, Geographical Information Systems etc. 
\par Many such applications necessitate certain information to be broadcast from a source node to all the other nodes in the network. Flooding is a common strategy for broadcasting content to all nodes. In this strategy, each node, upon receiving a new message packet, forwards it to all its one-hop neighbours. While this strategy is simple and easy to implement, it is wasteful in terms of overall power consumption as the total number of transmissions across all nodes in the network can be quite high. It is also known to result in the `broadcast storm' problem \cite{tseng2002broadcast}.
\par An attractive alternative that has been considered in the literature is probabilistic retransmission \cite{sasson2003probabilistic} or probabilistic forwarding, in which each node in the network, upon receiving a new packet, decides to broadcast it to its one-hop neighbours with probability $p$, and takes no action with probability $1-p$.
\par The probabilistic forwarding algorithm has been well studied with reference to ad-hoc networks. It is also referred to as the gossip protocol in some literature; for instance, in \cite{haas2006gossip}, gossiping is compared with flooding and is found to save up to 35\% message overhead. It is also used alongside routing protocols to improve network performance in terms of end-to-end latency and throughput.
\par In our scenario, a source node needs to transmit $k$ message packets to a large fraction of nodes in the network. These $k$ message packets are encoded into $n$ coded packets and are transmitted by the source. We assume that, any node receiving at least $k$ out of these $n$ coded packets can decode the original $k$ message packets. The source transmits all $n$ coded packets with probability 1, whereas other nodes in the network employ the probabilistic forwarding algorithm. 
\par Our goal is to analyze the performance of the above algorithm. In particular, we wish to find the minimum retransmission probability $p$ for which the expected fraction of nodes receiving at least $k$ out of the $n$ coded packets is close to 1, which we deem a ``successful broadcast''. In other words, we wish to determine the probability with which every node needs to retransmit a received packet, so that (with high probability) almost all the nodes in the network can decode the $k$ message packets which the source intended to communicate. This probability yields the minimum value for the expected total number of transmissions across all the network nodes for a successful broadcast. We study the variation of the expected total number of transmissions with redundancy. 
\par Our simulation results show that, over a variety of network topologies that are highly connected --- for example, grids, and random geometric graphs above the connectivity threshold --- the expected total number of transmissions by all the nodes of the network decreases initially to a minimum and then increases as the redundancy $\rho=\frac{n-k}{k}$ is increased. This means that there is some value of redundancy which is optimal, in the sense that it minimizes the number of transmissions. Consequently, a network in the grid or the random geometric topology performs best when operated at this value of redundancy and the corresponding minimum forwarding probability. On the other hand, over trees, our simulations and analysis indicate that there is no benefit to introducing redundancy in the probabilistic forwarding protocol: the expected total number of transmissions increases with redundancy. 
\par The rest of the paper is organized as follows. In Section~\ref{sec:formulation}, we provide a theoretical framework for the problem we are trying to address. Section~\ref{sec:simulation} has simulation results for the probabilistic forwarding algorithm on different types of network topologies, such as random geometric graphs, grids and trees. In Section~\ref{sec:trees}, we present a mathematical analysis of the protocol on trees, which explains the simulation results obtained for that topology. Section~\ref{sec:discuss} discusses and gives some heuristic insight into our results.  

\section{Problem Formulation}\label{sec:formulation}
Consider a graph $G=(V,E)$ where $V$ is the vertex set with $N$ vertices (nodes) and $E$ is the set of edges (communication links). A source node $s \in V$ has $k$ message packets which need to be broadcast in the network. The source $s$ encodes the $k$ messages into $n$ coded packets using a Maximum Distance Separable (MDS) code (see e.g., \cite[Ch.~11]{roth2006}). Such a code is able to correct up to $n-k$ erasures. Thus, on receiving any $k$ of these $n$ coded packets, a node can retrieve the original $k$ message packets by treating the unreceived packets as erasures. We assume that all the required encoding/decoding operations are carried out over a sufficiently large field, so that an MDS code with the necessary parameters exists. The redundancy introduced in this scheme is $\rho = \frac{n-k}{k}$.

 The source node broadcasts all $n$ coded packets to its one-hop neighbours, after which the probabilistic forwarding protocol takes over. A node receiving a particular packet for the first time, transmits it to all its one-hop neighbours with probability $p$ and takes no action with probability $1-p$. Each packet is transmitted independently of other packets and other nodes. This probabilistic retransmission continues until the time there are no further transmissions in the system. This maximum time is finite since each node in the network decides to transmit a particular packet only the first time it is received. Subsequent receptions of that packet are ignored.
\par We are interested in the following scenario. Let $R_{k,\rho}$ be the number of nodes, including the source node, that receive at least $k$ out of the $n$ packets. Given a $\delta \in (0,1)$, let $p_{k,\rho,\delta}$ be the minimum forwarding probability $p$ such that $\mathbb{E}\left[\frac{R_{k,\rho}}{N}\right] \geq 1-\delta$. The performance measure of interest, denoted by $\tau_{k,\rho,\delta}$, is the expected total number of transmissions across all nodes when the forwarding probability is set to $p_{k,\rho,\delta}$. Here, it should be clarified that each network node transmits a given coded packet at most once; a single (broadcast) transmission of a packet by a node is received by all its one-hop neighbours. Since each network node transmits each of the $n$ coded packets independently with probability $p_{k,\rho,\delta}$, the expected number of transmissions by a particular node, assuming it receives all $n$ packets, is $np_{k,\rho,\delta}$. However, a node may not receive all $n$ packets owing to the structure of the graph. Hence, by independence of packet transmissions across nodes, and the fact that the source node braodcasts all $n$ packets, the expected total number of transmissions, $\tau_{k,\rho\delta}$, is bounded above by $n+(N-1)np_{k,\rho,\delta}.$
Our aim is to determine, for a given $k$ and $\delta$, how $\tau_{k,\rho,\delta}$ varies with $\rho$, and the value of $\rho$ at which it is minimized.

\section{Simulation results}\label{sec:simulation}
Simulations were performed on random geometric graphs, grids and binary trees. 
For each of these graphs, the value of $k$ and $\delta$ was fixed initially and transmission of $n$ packets was carried out from a source node as explained in the previous section.  The value of $p$ was decreased from 1 and the average number of nodes that receive $k$ out of the $n$ packets was computed over $500$ simulation trials. The minimum forwarding probability for which this average value exceeded $1-\delta$ was recorded as $p_{k,\rho,\delta}$ along with the corresponding value of $\tau_{k,\rho,\delta}$. Both these quantities were plotted as a function of the redundancy $\rho$. The results are summarized below.
\subsection{Random geometric graph (RGG)}
\begin{figure}
	\centering
	\includegraphics[width=0.55\linewidth]{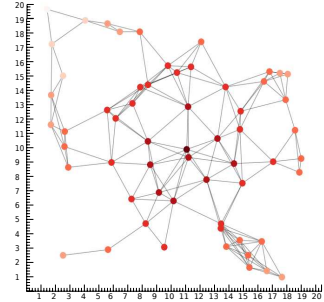}
	\caption{Random geometric graph.}
		\vspace{-1.5em}
	\label{rgg}
\end{figure}
RGGs have been widely used to model ad-hoc networks \cite{sinclair2010mobile}. In an RGG, nodes are distributed uniformly in some region, and two nodes are connected by an edge iff they are at most at a prescribed distance $r$ from each other. In our simulations, 60 nodes are deployed uniformly in a rectangular region of $20\times 20$ units, as shown in Fig. \ref{rgg}. These form the vertices of the graph. The distance $r$ is chosen so that the RGG operates well above the connectivity threshold\footnote{The connectivity threshold is the least distance $r$ for which the RGG is connected, with high probability \cite[Ch.~7, pp.~158--159]{vaze2015random}.}. A node is chosen randomly to be the source $s$, and the probabilistic forwarding algorithm is simulated. The simulation is done for $k = 100$, and $\delta=0.1$ and $0.05$, and the results are shown in Fig. \ref{fig1}.
\begin{figure} 
	\centering
	\subfloat[Minimum retransmission probability]{%
		\includegraphics[width=\linewidth]{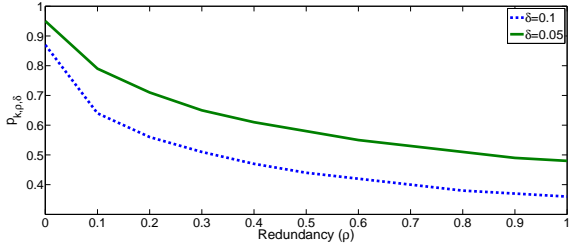}}
	\label{1a}
	\subfloat[Expected total number of transmissions]{%
		\includegraphics[width=\linewidth]{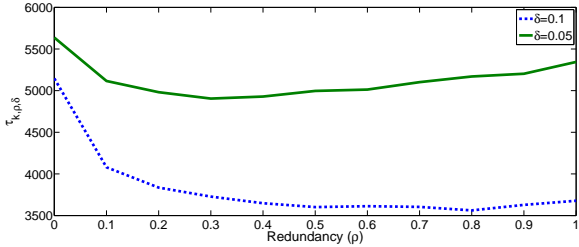}}
	\label{1b}\\
	\caption{Probabilistic forwarding on a RGG of 60 nodes in a $20\times20$ area with radius $r$ = 5.5 units.}
	\label{fig1} 
	\vspace{-1.5em}
\end{figure}
\par It is observed that, for fixed $k$ and $\delta$, the minimum retransmission probability $p_{k,\rho,\delta}$ decreases as the redundancy $\rho$ is increased. This is expected since with a larger number of coded packets $n$, we can afford to retransmit each packet with a smaller probability while still being able to deliver $k$ out of the $n$ packets to a $1-\delta$ fraction of the nodes in the network. 

A much more interesting trend is that of the expected total number of transmissions $\tau_{k,\rho,\delta}$, which initially decreases and then grows gradually as the redundancy $\rho$ is increased. There is thus an optimal value of $\rho$ that minimizes $\tau_{k,\rho,\delta}$. This happens due to an interplay between two opposing factors: an increase in $\rho$ leads to a decrease in $p_{k,\rho,\delta}$, which contributes towards a decrease in $\tau_{k,\rho,\delta}$. But this is opposed by the fact that a higher redundancy tends to increase the number of transmissions, since there are a larger number of packets to be transmitted in the network. The initial decrease in $\tau_{k,\rho,\delta}$ can be attributed to the dominant effect of the initial steep decrease in $p_{k,\rho,\delta}$. However, as the redundancy is further increased, the decrease in $p_{k,\rho,\delta}$ becomes more gradual. In this regime, as the number of coded packets $n$ increases, the gain obtained via the slight decrease in $p_{k,\rho,\delta}$ is more than offset by the fact that there are more packets to be transmitted in the network. This trend can be seen much more clearly in a grid topology.

\subsection{Two-Dimensional (2-D) Grid}
Consider a $31\times31$ square grid as shown in Fig. \ref{grids}(a). The source node is assumed to be at the center of the grid and the probabilistic forwarding algorithm is implemented. The simulation results for $k=100$ and $\delta=0.1,0.05$ are shown in Fig. \ref{fig2}.
\begin{figure} 
	\centering
	\subfloat[Minimum retransmission probability]{%
		\includegraphics[width=\linewidth]{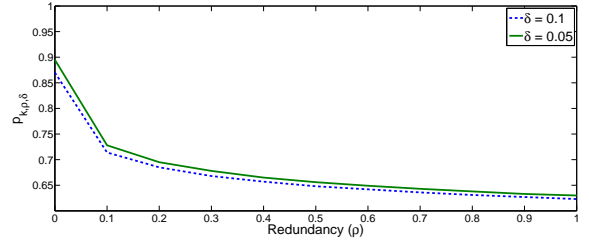}}
	\label{2a}
	\subfloat[Expected total number of transmissions]{%
		\includegraphics[width=\linewidth]{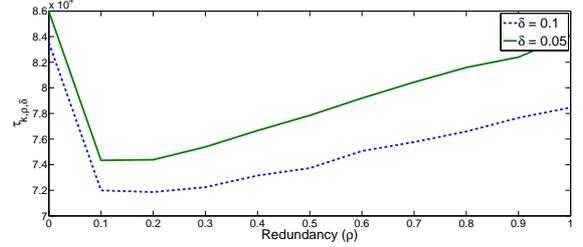}}
	\label{2b}\\
	\caption{Probabilistic forwarding on a $31\times31$ grid.}
	\label{fig2} 
	\vspace{-1.5em}
\end{figure}
The decrease in $p_{k,\rho,\delta}$ as a function of $\rho$ is similar to that seen in the RGG setting above. The variation of $\tau_{k,\rho,\delta}$ with $\rho$ is also similar to that observed in the RGG, but the trend is more pronounced here. Note that even when $\rho=0$, $p_{k,\rho,\delta}\neq1$, since we only ask for a $1-\delta$ fraction of the nodes to receive at least $k$ packets.

\subsection{Binary tree}
A rooted binary tree of height $H$ is the graph depicted in Fig.~\ref{bintree}. The probabilistic forwarding strategy with coded packets is simulated over a rooted binary tree of height 10. The root of the tree is the source node. The simulation results are shown in Fig.~\ref{fig3}, again for $k=100$ and $\delta=0.1,0.05$.
\begin{figure} 
	\centering
	\subfloat[Minimum retransmission probability]{%
		\includegraphics[width=\linewidth]{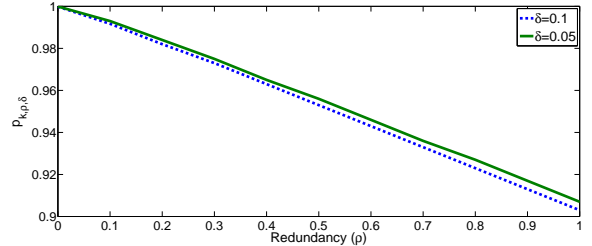}}
	\label{3a}
	\subfloat[Expected total number of transmissions]{%
		\includegraphics[width=\linewidth]{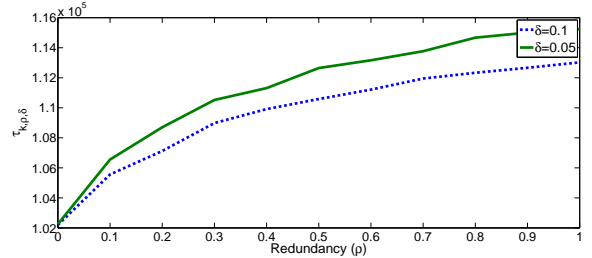}}
	\label{3b}\\
	\caption{Probabilistic forwarding on a binary tree of height $H=10$.}
	\label{fig3} 
	\vspace{-1.5em}
\end{figure}
Here, we find that the minimum probability of retransmission decreases with increase in redundancy, but the rate of decrease is much lower than that in the previous two topologies.  Consequently, the expected total number of transmissions $\tau_{k,\rho,\delta}$ always increases with redundancy. A theoretical analysis of these observations is presented in the following section.

\section{Analysis of Probabilistic Forwarding on Trees}\label{sec:trees}
In this section, we analyze the probabilistic forwarding mechanism on trees. In particular, we study rooted binary trees. However, the results in this section can be easily extended to $d$-ary trees.


\par Consider again a rooted binary tree of height $H \ge 1$ as shown in Fig. \ref{bintree}.
\begin{figure}
	\centering
	\includegraphics[width=0.7\linewidth]{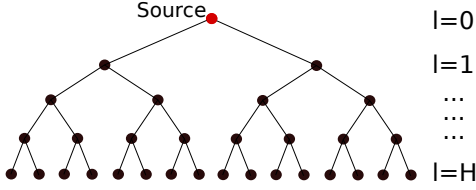}
	\caption{A rooted binary tree of height $H$.}
	\label{bintree}
	\vspace{-1.5em}
\end{figure}
 The tree consists of $H$ levels, with the root node at level $l = 0$, and for $l = 1,2,\ldots,H-1$, each node at level $l$ having two children at level $l+1$. Thus, there are $2^l$ nodes at level $l$, for $l = 0,1,2,\ldots,H$, so that the total number of nodes in the tree is $N=\sum_{l=0}^H 2^l = 2^{H+1}-1$. The root node is taken to be the source node. It initially has $k$ message packets which it encodes into $n = k(1+\rho)$ coded packets using an MDS code and transmits all of them (with probability $1$) to each of its children. Subsequent transmissions of the $n$ coded packets follow the probabilistic forwarding mechanism for some fixed value of the retransmission probability $p \in [0,1]$. Note that nodes that share a common parent receive the same packets and hence will possess the same number of packets at the end of the probabilistic forwarding mechanism. We will assume that the nodes at level $H$ (i.e., the leaf nodes) do not transmit, as there is nothing to be gained in allowing them to do so. 
 
 As explained in Section~\ref{sec:formulation}, the reception of any $k$ out of the $n$ coded packets suffices to recover the $k$ message packets. The notation $R_{k,\rho}$ was introduced there to denote the random number of nodes (including the source node) that possess at least $k$ out of the $n$ coded packets at the end of the protocol. We further let $T_{k,\rho}$ be the total number of transmissions that take place across all $N$ nodes during the course of the protocol. Our primary interest is in determining, for a given $\delta \in (0,1)$, the quantity $\tau_{k,\rho,\delta}$, which denotes the minimum expected value of $T_{k,\rho}$, minimized over all $p \in [0,1]$ such that $\E\left[\frac{R_{k,\rho}}{N}\right] \ge 1-\delta$. As we will see below, $\E[T_{k,\rho}]$ is a monotonically increasing function of $p$, so that it is also of interest to determine $p_{k,\rho,\delta}$, which is the minimum probability $p \in [0,1]$ such that $\mathbb{E}\left[\frac{R_{k,\rho}}{N}\right] \geq 1-\delta$. 

  We can express $R_{k,\rho}$ as $\sum_{l = 0}^H R_l$, where $R_l$ is the number of nodes at level $l$ that hold at least $k$ of the $n$ packets. Note that $R_0 = 1$. Similarly, $T_{k,\rho} = \sum_{l = 0}^{H-1} T_l$, where $T_l$ is the number of transmissions by nodes at level $l$. Note that $T_0 = n$, since the source node always transmits all $n$ coded packets. Also, $T_H = 0$, since leaf nodes are assumed not to transmit. We compute the expected values of $R_l$ and $T_l$ next.

Note that there is only a single path from the root to any node in the tree. Thus, for a node $\v$ at level $l$ to receive the $j$th coded packet from the root, all the intermediate nodes on the unique path from the root to $\v$ need to transmit the $j$th packet. Hence, for $l \ge 1$, 
\begin{equation}
\mathbb{P}(\text{node $\v$ at level $l$ receives the $j$th packet}) = p^{l-1}. \label{eq:vlj}
\end{equation}
Since distinct packets are transmitted independently of each other, we have
\begin{IEEEeqnarray}{rCl}\label{prob_nodes}
	\P(\text{node $\v$} & & \!\!\!\!\!\!\!\! \text{ at level $l$ receives at least $k$ out of $n$ packets}) \nonumber \\
	&=& \sum_{k'=k}^{n} \binom{n}{k'}p^{(l-1)k'}(1-p^{l-1})^{n-k'} \nonumber \\
	&=& \P(Z_{l-1} \ge k) \nonumber
\end{IEEEeqnarray}
where $Z_{l-1} \sim \text{Bin}(n,p^{l-1})$ is a binomial random variable with parameters $n$ and $p^{l-1}$. Summing the above over all nodes $\v$ at level $l$, we obtain $\E[R_l] = 2^l \,\P(Z_{l-1} \ge k)$, and hence, 
\begin{equation}\label{rCl}
\E[R_{k,\rho}] \ = \  1+\mathbb{E}\left[\sum_{l=1}^{H}R_l\right] \ = \ 1+ \sum_{l=1}^{H}2^l \, \P(Z_{l-1}\ge k).
\end{equation}

Following \eqref{eq:vlj}, we also see that the probability that a node $\v$ at level $l$ receives and retransmits the $j$th packet equals $p^l$, for $l = 1,2,\ldots,H-1$. Hence, the expected number of transmissions of the $j$th packet by nodes at level $l$ is equal to $2^l\,p^l$. Summing over $j=1,2,\ldots,n$, we obtain that $\E[T_l] = n(2p)^l = k(1+\rho)(2p)^l$, and as a consequence,
$$
\E[T_{k,\rho}] \ = \ \sum_{l=0}^{H-1} \E[T_l] \ = \  k(1+\rho)\frac{(2p)^H-1}{2p-1}.
$$
Thus, $\E[T_{k,\rho}]$ is a monotonically increasing function of $p$, from which we infer that
\begin{equation}
\tau_{k,\rho,\delta}=k(1+\rho) \frac{(2p_{k,\rho,\delta})^{H}-1}{2p_{k,\rho,\delta}-1}.
\label{taukrho}
\end{equation}
Consequently, to understand how $\tau_{k,\rho,\delta}$ behaves as a function of the redundancy $\rho$, for fixed $k$ and $\delta$, it is necessary to understand how $p_{k,\rho,\delta}$ varies with $\rho$. In the remainder of this section, we derive a good approximation for $p_{k,\rho,\delta}$ valid for all $\delta$ sufficiently close to $0$ and for all sufficiently large $k$. We will also assume that the height $H$ of the tree is large enough that $2^{H+1} \gg 1$.

From \eqref{rCl}, we see that $p_{k,\rho,\delta}$ is the least value of $p \in [0,1]$ for which 
\begin{equation}\label{exprtomin}
\frac{1}{2^{H+1}-1}+\frac{\sum_{l=0}^{H-1}2^{l+1}\mathbb{P}(Z_l\geq k)}{2^{H+1}-1} \geq 1-\delta.
\end{equation}
where $Z_l \sim \text{Bin}(k(1+\rho),p^l)$ for $l = 0,1,\ldots,H-1$. To proceed, we need the following lemma.

\begin{lem}\label{lemma}
For $\zeta \sim \text{Bin}(n,p)$ and $0 \le k \le n$, $\P(\zeta \ge k)$ is a continuous, monotonically increasing function of $p$.
\end{lem}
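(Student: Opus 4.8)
The plan is to dispose of continuity and monotonicity by separate arguments. Writing
$\P(\zeta \ge k) = \sum_{j=k}^{n} \binom{n}{j} p^j (1-p)^{n-j}$
exhibits the tail probability as a polynomial in $p$, so continuity on $[0,1]$ is immediate and needs no further comment. The real content of the lemma is the monotonicity, and I would establish it by a coupling (stochastic-domination) argument rather than by manipulating the sum directly.

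For the coupling, I would fix $0 \le p \le p' \le 1$ and introduce i.i.d.\ random variables $U_1, \ldots, U_n$, uniform on $[0,1]$, on a common probability space. Setting $X_i = \mathbf{1}[U_i \le p]$ and $Y_i = \mathbf{1}[U_i \le p']$ makes each $X_i$ a Bernoulli$(p)$ and each $Y_i$ a Bernoulli$(p')$ variable, so that $\zeta_p := \sum_{i=1}^{n} X_i \sim \text{Bin}(n,p)$ and $\zeta_{p'} := \sum_{i=1}^{n} Y_i \sim \text{Bin}(n,p')$ carry exactly the required marginals. Since $p \le p'$ forces $X_i \le Y_i$ pointwise, we get $\zeta_p \le \zeta_{p'}$ surely, whence $\{\zeta_p \ge k\} \subseteq \{\zeta_{p'} \ge k\}$ and therefore $\P(\zeta_p \ge k) \le \P(\zeta_{p'} \ge k)$. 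As $p, p'$ were arbitrary, this is precisely the asserted monotonicity in the non-strict sense, which is all that \eqref{exprtomin} requires; note that strict monotonicity cannot hold in full generality, since for $k = 0$ the probability is identically $1$.

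An alternative I would keep in reserve is the purely analytic route: differentiate the polynomial term by term, collect the two contributions from each summand using $j\binom{n}{j} = n\binom{n-1}{j-1}$ and $(n-j)\binom{n}{j} = n\binom{n-1}{j}$, and observe that the resulting sum telescopes to the single nonnegative term $n\binom{n-1}{k-1}p^{k-1}(1-p)^{n-k}$. Since this derivative is $\ge 0$ on $[0,1]$, monotonicity follows. This gives a pleasingly explicit one-line formula, but the telescoping bookkeeping is the one place where an index or sign slip can occur, and the boundary cases $k=0$ and $k=n$ must be checked separately. For that reason I expect the coupling argument to be both shorter and more robust, and I would present it as the main proof, relegating the derivative computation to a remark.
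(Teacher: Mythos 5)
Your proof is correct and matches the paper's own argument essentially verbatim: continuity follows because the tail probability is a polynomial in $p$, and monotonicity is established by the same uniform-coupling construction $X_i = \mathbf{1}[U_i \le p]$, $Y_i = \mathbf{1}[U_i \le p']$ yielding $\zeta_p \le \zeta_{p'}$ almost surely. The alternative derivative computation you mention in reserve is a fine remark but is not needed, exactly as you concluded.
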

\begin{proof} Note first that $\P(\zeta \ge k) = \sum_{k' \ge k} \binom{n}{k} p^{k'} (1-p)^{n-k'}$, which is a polynomial in $p$, and hence $\P(\zeta \ge k)$ is continuous in $p$. Monotonicity is by a standard coupling argument: Let $U_i$, $i = 1,2,\ldots,n$, be i.i.d.\ $\text{Unif}[0,1]$ random variables. For $p \le p'$, let $X_i = \mathbb{I}_{\{U_i \le p\}}$ and $X_i' = \mathbb{I}_{\{U_i \le p'\}}$, so that the $X_i$s are i.i.d.\ Ber$(p)$ and the $X'_i$s are i.i.d\ Ber$(p')$. Then, $\zeta = \sum_{i=1}^n X_i$ is Bin$(n,p)$, while $\zeta' = \sum_{i=1}^n X'_i$ is Bin$(n,p')$. By construction, $X_i(U_i) \le X'_i(U_i)$, and hence, $\zeta \le \zeta'$ almost surely. Thus, $\P(\zeta \ge k) \le \P(\zeta' \ge k)$. \end{proof}

The lemma above shows that the LHS of \eqref{exprtomin} monotonically increases from $\frac{1}{2^{H+1}-1}$ to $1$, as $p$ goes from $0$ to $1$. Thus, if $\delta > 0$ is such that $1-\delta$ lies between $\frac{1}{2^{H+1}-1}$ and $1$, then by continuity of the LHS, $p_{k,\rho,\delta}$ is the least value of $p$ for which the inequality in \eqref{exprtomin} holds with equality. To obtain an estimate of $p_{k,p,\delta}$, we re-write \eqref{exprtomin} as
\begin{equation}\label{key_ineq}
\frac{\sum_{l=0}^{H-1}2^{l+1}\mathbb{P}(Z_l\geq k)}{2^{H+1}-1} \geq 1-\delta-\frac{1}{2^{H+1}-1},
\end{equation}
and obtain an approximation of the LHS above using the fact that $Z_l$ concentrates about its mean, $k(1+\rho)p^l$. We will in fact neglect the term $\frac{1}{2^{H+1}-1}$ on the right-hand-side (RHS), since we assume that $2^{H+1} \gg 1$.

We divide the terms in the summation on the LHS of \eqref{key_ineq} into two cases, as follows. Set $l^* := \left\lfloor\frac{\log(\frac{1}{1+\rho})}{\log p}\right\rfloor$. For simplicity, assume that $p$ and $\rho$ are such that $\frac{\log(\frac{1}{1+\rho})}{\log p}$ is not an integer. (We will later deal with the situation when this assumption does not hold.) Then, we either have $l \le l^*$ (Case~1), in which case $p^l > \frac{1}{1+\rho}$, or $l > l^*$ (Case~2), in which case $p^l < \frac{1}{1+\rho}$.
\begin{itemize}
	\item \textbf{Case 1:} $0 \le l \le l^*$, i.e., $p^l>\frac{1}{1+\rho}$. Using the Chernoff bound for the sum of $n$ independent Bernoulli random variables (see e.g.\ \cite{boucheron2013concentration}), we get
	\begin{IEEEeqnarray}{rCl}\label{term1}
		\mathbb{P}(Z_l\geq k) &\ge&   1-\mathbb{P}(Z_l \le k)\nonumber\\
&\geq& 1-\exp\left[-k(1+\rho)\mathbb{D}\left(\frac{1}{1+\rho} \parallel p^{l}\right)\right], \ \ \ \ 
	\end{IEEEeqnarray}
 where $\mathbb{D}(\cdot \parallel \cdot)$ is the relative entropy, defined as $\mathbb{D}(r\parallel s)=r \log\frac{r}{s}+(1-r)\log\frac{1-r}{1-s}$. Thus, for all sufficiently large $k$, we have $P(Z_l \ge k) \approx 1$ for $l = 0,1,\ldots,l^*$.
 	\item \textbf{Case 2}: $l^* < l \le H-1$, i.e., $p^l<\frac{1}{1+\rho}$.	By virtue of Lemma \ref{lemma} and the Chernoff bound for the sum of $n$ independent Bernoulli random variables, we have
	\begin{IEEEeqnarray}{rCl}\label{term2}
		\mathbb{P}(Z_l\geq k)&\leq& \mathbb{P}(Z_{l^*+1}\geq k) \notag \\
		&=&  \exp\left[{-k(1+\rho) \mathbb{D}\left(\frac{1}{1+\rho} \parallel p^{l^*+1}\right)}\right]. \ \ \  \ 
	\end{IEEEeqnarray}
Thus, for all sufficiently large $k$, we have $P(Z_l \ge k) \approx 0$ for $l = l^*+1,\ldots,H-1$.
\end{itemize}
Using the above approximations, we obtain the following estimate for the sum in the LHS of \eqref{key_ineq}: for all sufficiently large $k$, 
\begin{align*}
\sum_{l=0}^{H-1}2^{l+1} & \mathbb{P}(Z_l\geq k) \\ & = \ \sum_{l=0}^{l^*}2^{l+1}\mathbb{P}(Z_l\geq k) +  \sum_{l=l^*+1}^{H-1}2^{l+1}\mathbb{P}(Z_l\geq k) \\
& \approx \ \sum_{l=0}^{l^*}2^{l+1} \ \ = \ \ 2^{l^*+2}-2.
\end{align*}
With this, the LHS of \eqref{key_ineq} becomes
$$
\frac{\sum_{l=0}^{H-1}2^{l+1}\mathbb{P}(Z_l\geq k)}{2^{H+1}-1} \ \approx \ \frac{2^{l^*+2}-2}{2^{H+1}-1} \ \approx \ 2^{l^*-H+1},
$$
using our assumption that $2^{H+1} \gg 1$. Thus, for all $k$ sufficiently large, the inequality \eqref{key_ineq} becomes
\begin{equation}
2^{l^*-H+1} \gtrsim 1-\delta 
\label{approx_ineq}
\end{equation}
The exponent in the LHS above is an integer, and for $0\leq\delta<\frac{1}{2}$,
the inequality is effectively the same as $2^{l^*-H+1} \ge 1$, which holds iff $l^*\ge H-1$. Replacing $l^*$ by $\left\lfloor\frac{\log(\frac{1}{1+\rho})}{\log p}\right\rfloor$, we find that $l^* \ge H-1$ iff $p \ge \left(\frac{1}{1+\rho}\right)^{\frac{1}{H-1}}$. 

At this point, we would like to comment on what happens to the analysis above when $\frac{\log(\frac{1}{1+\rho})}{\log p}$ is in fact an integer, for instance, when $p = \left(\frac{1}{1+\rho}\right)^{\frac{1}{H-1}}$.
In this case, $l^* = \frac{\log(\frac{1}{1+\rho})}{\log p}$, and the bound obtained in Case~1 would hold for all $l\leq l^*-1$, while the bound in Case~2 would hold for all $l\geq l^*+1$. For $l = l^* = \frac{\log(\frac{1}{1+\rho})}{\log p}$, the mean of $Z_{l^*}$ is $k(1+\rho)p^{l^*} = k$, and hence, for large $k$, we have $\P(Z_{l^*} \ge k) \approx \frac12$. This additional term needs to be accounted for on the LHS of \eqref{key_ineq}, but this does not result in a significant change in the ensuing analysis.

We can therefore conclude from the discussion following \eqref{approx_ineq} that the least value of $p$ for which the inequality in \eqref{key_ineq} holds is, effectively, 
\begin{equation}
p_{k,\rho,\delta} = \left(\frac{1}{1+\rho}\right)^{\frac{1}{H-1}}. 
\label{pkrho}
\end{equation}

	\begin{figure} 
		\centering
		\subfloat[Minimum retransmission probability]{%
			\includegraphics[width=\linewidth]{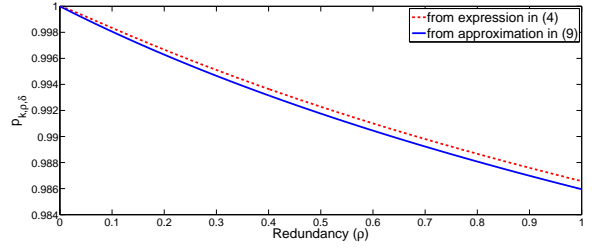}}
		\label{4a}
		\subfloat[Expected total number of transmissions]{%
			\includegraphics[width=\linewidth]{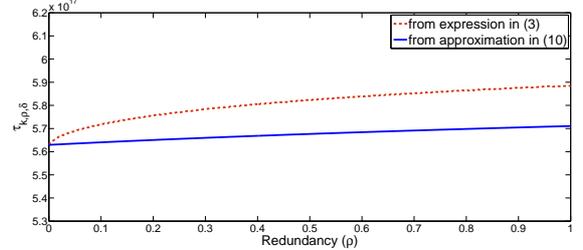}}
		\label{4b}\\
		\caption{Comparison between the true values of $p_{k,\rho,\delta}$ and $\tau_{k,\rho,\delta}$ obtained from (\ref{exprtomin}) and \eqref{taukrho}, and their corresponding expressions in \eqref{pkrho} and \eqref{taukrho2}, for $k = 500$, $\delta = 0.1$ and $H=50$.}
		\label{fig4} 
		\vspace{-1.5em}
	\end{figure}

Correspondingly, the expected total number of transmissions across all $N$ nodes of the tree is given, via \eqref{taukrho}, by
\begin{align}
\tau_{k,\rho,\delta}  \ & = \ k(1+\rho) \, \frac{2^H \left(\frac{1}{1+\rho}\right)^{\frac{H}{H-1}} - 1}{2\left(\frac{1}{1+\rho}\right)^{\frac{1}{H-1}}-1} \notag \\
& = \ k \, \frac{2^H \left(\frac{1}{1+\rho}\right)^{\frac{1}{H-1}} - (1 + \rho)}{2\left(\frac{1}{1+\rho}\right)^{\frac{1}{H-1}}-1} 
\label{taukrho2}
\end{align}

The expressions in \eqref{pkrho} and \eqref{taukrho2} have been plotted in Fig.~\ref{fig4} for $k = 500$ message packets and a binary tree of height $H = 50$. Also, plotted are the true values of $p_{k,\rho,\delta}$ obtained from the inequality \eqref{exprtomin} with $\delta = 0.1$, and the corresponding $\tau_{k,\rho,\delta}$ from \eqref{taukrho}. 	The plots allow us to conclude that there is no benefit in introducing redundancy in the form of coding to the probabilistic retransmission protocol on a tree, since the expected total number of transmissions increases with redundancy.

\section{Discussion}\label{sec:discuss}

\begin{figure}[!t]
	\centering
	\subfloat[31 $\times$ 31 grid ($G$)]{%
		\includegraphics[width=0.4\linewidth]{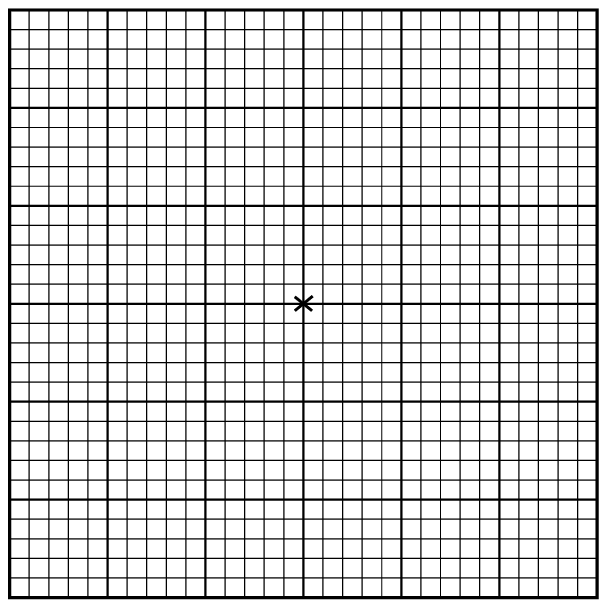}}
	\label{5a}
	\subfloat[Every fifth row ($G5$)]{%
		\includegraphics[width=0.4\linewidth]{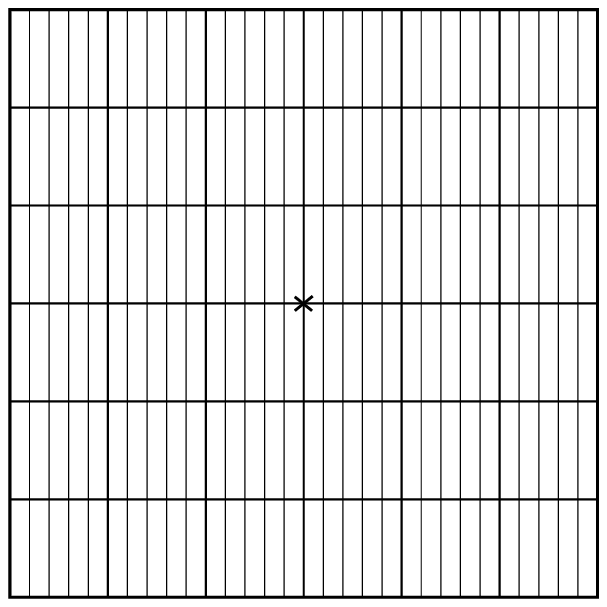}}
	\label{5b}
	\subfloat[Every tenth row ($G10$)]{%
		\includegraphics[width=0.4\linewidth]{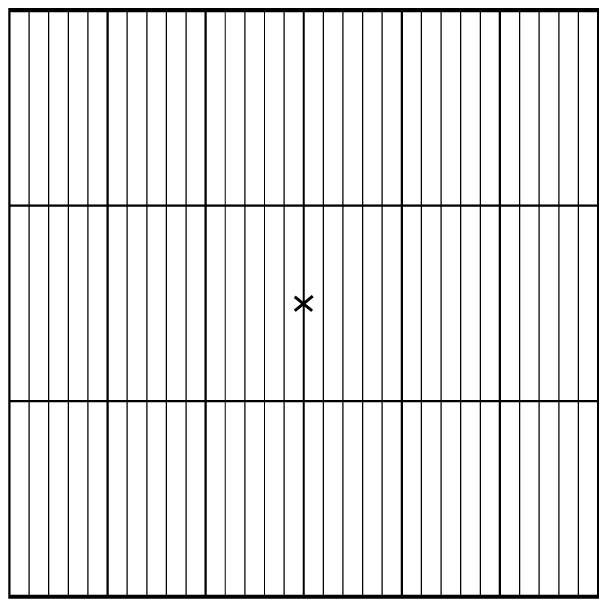}}
	\label{5c}
	\subfloat[Boundary and the center row ($G15$)]{%
		\includegraphics[width=0.4\linewidth]{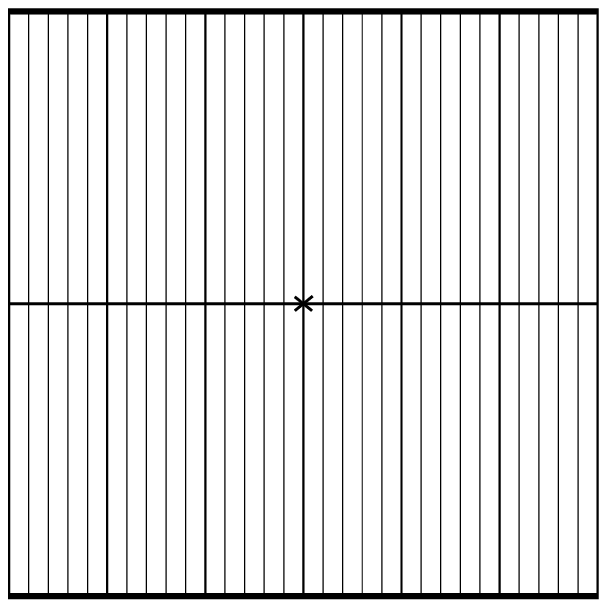}}
	\label{5d}\\
	\caption{Graphs to illustrate the importance of multiple paths.}
	\label{grids} 

\end{figure}

The simulations and analysis of the last two sections reveal that there is a significant benefit to introducing coding-based redundancy into the probabilistic retransmission protocol when the underlying network topology is highly connected (as in a large grid), but not so when the underlying network is a tree. The benefit is in terms of a reduction in the overall number of transmissions needed for a successful broadcast. This phenomenon seems to arise from the availability of ``multipath diversity'' in the network, i.e., the existence of multiple paths between the source node and any other node in the network. Indeed, in a binary tree, there is only one path from the root to any other node, whereas in a large grid, there is abundant multipath diversity. 

To test our multipath diversity hypothesis more systematically, we performed further simulations of the probabilistic forwarding protocol on graphs with different levels of multipath diversity. Starting with the $31 \times 31$ grid $G$ depicted in Fig.~\ref{grids}(a), we systematically deleted edges to obtain subgraphs $G5$, $G10$ and $G15$ with lower multipath diversity. Specifically, the graph $Gq$ (for $q = 5,10,15$) was obtained from the grid $G$ as follows. The nodes of $G$ form a $31 \times 31$ array, whose rows can be indexed by the integers $0,1,2,\ldots,30$, with $0$ denoting the index of the topmost row. Then, $Gq$ is obtained from $G$ by retaining the horizontal edges connecting adjacent nodes in row $j$, for every $j$ that is a multiple of $q$, and deleting all other horizontal edges --- see Figs.~\ref{grids}(b)--(d). The multipath diversity evidently decreases as $q$ increases. The results of our simulations, for $k=100$ packets, with the expected fraction of nodes receiving at least $k$ packets being $1-\delta = 0.9$, are shown in Fig. \ref{grids_simu}. In these simulations, the source node is the node at the centre of the grid, depicted by a `$\times$' in each of the graphs in Fig.~\ref{grids}. 

\begin{figure}[!t]
	\centering
	\subfloat[Minimum forwarding probability]{%
		\includegraphics[width=\linewidth]{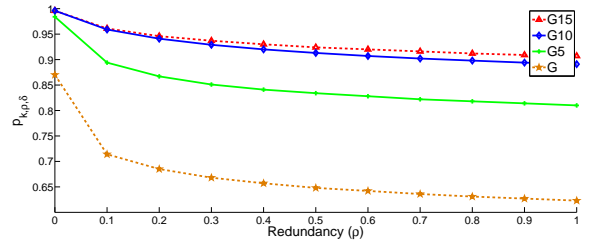}}
	\label{grids_simua}
	\medskip
	
	\subfloat[Expected total number of transmissions]{%
		\includegraphics[width=\linewidth]{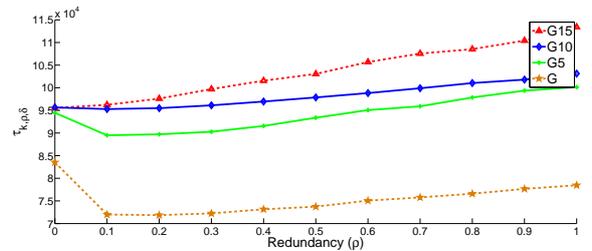}}
	\label{grids_simub}
	\vskip 8pt

	\caption{Simulation results on $G, G5, G10$ and $G15$.}
	\label{grids_simu} 

\end{figure}

\newpage
Clearly, the results plotted in Fig.~\ref{grids_simu} support our hypothesis that multipath diversity in a network is an important factor in determining whether or not the probabilistic forwarding protocol over the network would benefit from coding-based redundancy. In future work, we hope to be able to precisely characterize how multiple paths lead to a reduction in the overall number of transmissions needed for a successful broadcast.

\section*{Acknowledgements} The research presented in this paper was supported in part by the DRDO-IISc ``Frontiers'' Research Programme, and by a Cisco PhD Fellowship awarded to the first author.

\bibliographystyle{IEEEtran}
\bibliography{NCCpaper}

\end{document}